\renewcommand{\arraystretch}{1.2}
\newtheorem{theorem}{Theorem}[section]
\newtheorem{lemma}{Lemma}[section]
\title{Covariance-Corrected WAIC for Bayesian Sequential Data Models}
\author{Safaa K. Kadhem \thanks{Corresponding author: safaa.kadhem@mu.edu.iq} \orcidlink{0000-0002-5287-5381}}
\affil[1]{Department of Mathematics and Applications Computer, College of Science, Al-Muthanna University, Iraq}
\begin{document}
	\maketitle
	
\begin{abstract}
	This paper introduces and develops a theoretical extension of the widely applicable information criterion (WAIC),  called the Covariance-Corrected WAIC (CC-WAIC),  that applied for Bayesian sequential data models. The CC-WAIC accounts for temporal or structural dependence by incorporating the full posterior covariance structure of the log-likelihood contributions, in contrast to the classical WAIC that assumes conditional independence among data.  We exploit the limitations of classical WAIC in the sequential data contexts and derive the CC-WAIC criterion under a theoretical  framework. In addition,  we propose a bias correction based on effective sample size to improve estimation from Markov Chain Monte Carlo (MCMC) simulations. Furthermore, we highlight the advantages of CC-WAIC in terms of stability and appropriateness for dependent data. This new criterion is supported by formal mathematical derivations, illustrative examples, and discussion of implications for model selection in both classical and modern Bayesian applications. To evaluate the reliability of CC-WAIC under varying data regimes, we conduct simulation experiments across multiple time series lengths (small, medium, and large) and different levels of temporal dependence, enabling a comprehensive performance assessment.
	\\
	\textbf{Keywords:}Bayesian model selection; Covariance-Corrected WAIC; Effective sample size; Sequential data models; MCMC diagnostics
\end{abstract}

\section{Introduction}
\label{sec:intro}

Bayesian model selection is a cornerstone of modern statistical inference, aiming to balance model fit and complexity in order to maximize predictive performance. Classical information criteria such as Akaike's Information Criterion (AIC) \cite{akaike1974aic}, the Bayesian Information Criterion (BIC) \cite{schwarz1978bic}, and the Deviance Information Criterion (DIC) \cite{spiegelhalter2002dic} have been widely adopted due to their simplicity and computational efficiency. However, these methods rely on asymptotic approximations, such as Laplace expansions and plug-in estimates, which may fail under model misspecification or in the presence of complex posterior structures. 

To overcome these limitations, the Widely Applicable Information Criterion (WAIC) \cite{watanabe2010asymptotic} was proposed as a fully Bayesian alternative that integrates over the posterior distribution. WAIC is asymptotically equivalent to Bayesian leave-one-out cross-validation (LOO) \cite{vehtari2017practical}, making it attractive for general-purpose model evaluation. Yet, both WAIC and LOO assume conditional independence of observations, an assumption that is often violated in sequential or dependent data settings. Examples include time series econometrics \cite{hamilton1994time,tsay2010analysis}, speech and language processing \cite{jurafsky2000speech}, longitudinal biomedical studies \cite{lindgren2015spde}, and hidden-state dynamical systems \cite{rabiner1989tutorial}. Ignoring such dependence typically leads to underestimated model complexity and overly optimistic assessments of predictive accuracy \cite{burkner2020improving}. Furthermore, LOO becomes problematic in dependent data, since removing single observations disrupts the correlation structure and produces unstable predictive contrasts \cite{yao2018stacking}.

The Covariance-Corrected WAIC (CC--WAIC) addresses these challenges by explicitly incorporating the covariance structure of log-likelihood contributions across observations. Unlike standard WAIC, which may underestimate model complexity under dependence, CC--WAIC adjusts the effective number of parameters and accounts for temporal correlation without requiring repeated model refits. This correction not only improves accuracy in finite samples but also enhances stability in the presence of autocorrelated MCMC samples. Importantly, CC--WAIC reduces to standard WAIC under independence, ensuring compatibility with existing Bayesian workflows.

Recent developments such as the Widely Applicable Bayesian Information Criterion (WBIC) \cite{watanabe2013wbic} extend classical ideas to singular models, but they still assume independence and target marginal likelihood rather than predictive performance. CC--WAIC, by contrast, is specifically designed for dependent data, making it a more principled and practical tool for Bayesian model selection in sequential contexts.

The contributions of this work are fourfold. First, we formally derive CC--WAIC and establish its asymptotic equivalence to block-based LOO for stationary processes. Second, we introduce an effective sample size (ESS) correction to mitigate finite-sample bias in MCMC estimation. Third, we compare CC--WAIC against WAIC and LOO, demonstrating its robustness and computational efficiency. Fourth, we validate the method empirically using hidden Markov models, showing its superior ability to capture temporal dependence and improve model selection. 

The remainder of this paper is organized as follows. Section~\ref{sec:methodology} presents the methodological framework, including the derivation of CC--WAIC and its integration within hidden Markov models. Section~\ref{sec:simulation} reports on simulation studies under varying dependence structures and sample sizes. Section~\ref{sec:realdata} applies the proposed criterion to real-world sequential data. Section~\ref{sec_conclusion} concludes with a summary of findings and future research directions. Technical proofs and computational details are provided in the Appendix.

\section{Methodology}
\label{sec:methodology}

\subsection{Theoretical Limitations of Standard WAIC for Dependent Data}

The Widely Applicable Information Criterion (WAIC) \citep{watanabe2010asymptotic} represents a fundamental advancement in Bayesian model selection due to its fully Bayesian nature and asymptotic equivalence with Bayesian cross-validation. The standard WAIC formulation, however, relies critically on the assumption of conditional independence of observations given model parameters. Formally, for observed data $\mathbf{y} = (y_1, \ldots, y_n)$ and parameters $\theta$, WAIC assumes:

\begin{equation}
	p(\mathbf{y} \mid \theta) = \prod_{t=1}^n p(y_t \mid \theta).
	\label{eq:independence}
\end{equation}

Under this assumption, WAIC is defined as:

\begin{equation}
	\mathrm{WAIC} = -2 \left( \sum_{t=1}^n \log \mathbb{E}_{\theta} \left[ p(y_t \mid \theta) \right] - \sum_{t=1}^n \mathrm{Var}_{\theta} \left[ \log p(y_t \mid \theta) \right] \right),
	\label{eq:waic}
\end{equation}

where expectations are taken over the posterior $p(\theta \mid \mathbf{y})$. The second term serves as a penalty for model complexity, preventing overfitting.

In numerous modern applications including time series analysis, state-space models, and structured data sequences, observations exhibit inherent temporal or spatial dependence. For such models, the likelihood factorizes as:

\begin{equation}
	p(\mathbf{y} \mid \theta) = \prod_{t=1}^n p(y_t \mid y_{<t}, \theta),
	\label{eq:sequential_likelihood}
\end{equation}

where $y_{<t} = (y_1, \ldots, y_{t-1})$ denotes the history preceding observation $t$. This conditional dependence violates the fundamental assumption in Equation \eqref{eq:independence}, rendering the standard WAIC theoretically inappropriate for these models.

The deficiency of standard WAIC for dependent data manifests in two crucial aspects:
\begin{enumerate}
	\item The penalty term $p_{\mathrm{WAIC}} = \sum_{t=1}^n \mathrm{Var}_{\theta}[\log p(y_t \mid \theta)]$ ignores covariance terms $\mathrm{Cov}_{\theta}[\log p(y_i \mid \theta), \log p(y_j \mid \theta)]$ for $i \neq j$, leading to underestimation of effective model complexity and increased risk of overfitting \citep{vehtari2017practical}.
	
	\item The leave-one-out cross-validation approach, asymptotically equivalent to WAIC, becomes problematic with dependent data as removing individual observations disrupts the dependency structure, resulting in biased predictive estimates \citep{burkner2020improving}.
\end{enumerate}

These limitations have motivated various adaptations including block cross-validation and specialized WAIC variants, but these approaches often lack theoretical foundations or impose substantial computational burdens \citep{yao2018stacking}.

\subsection{Covariance-Corrected WAIC: A General Framework}

To address these limitations, we propose the Covariance-Corrected WAIC (CC-WAIC), a generalization of WAIC specifically designed for models with dependent data structures. Let us define the joint log-likelihood:

\begin{equation}
	L(\theta) = \sum_{t=1}^n \log p(y_t \mid y_{<t}, \theta),
	\label{eq:joint_loglikelihood}
\end{equation}

which properly accounts for the sequential dependency structure. The effective number of parameters in CC-WAIC is defined as the posterior variance of this joint log-likelihood:

\begin{equation}
	p_{\mathrm{CC}} = \mathrm{Var}_{\theta}[L(\theta)] = \int \left( L(\theta) - \mathbb{E}_{\theta}[L(\theta)] \right)^2 p(\theta \mid \mathbf{y}) d\theta.
	\label{eq:effective_params_ca_integral}
\end{equation}

Expanding this expression reveals the crucial difference from standard WAIC:

\begin{equation}
	p_{\mathrm{CC}} = \sum_{t=1}^n \mathrm{Var}_{\theta}[l_t] + 2 \sum_{1 \leq i < j \leq n} \mathrm{Cov}_{\theta}[l_i, l_j],
	\label{eq:variance_decomp}
\end{equation}

where $l_t = \log p(y_t \mid y_{<t}, \theta)$. The CC-WAIC is then defined as:

\begin{equation}
	\mathrm{CC\text{-}WAIC} = -2 \left( \sum_{t=1}^n \log \mathbb{E}_{\theta} \left[ p(y_t \mid y_{<t}, \theta) \right] - p_{\mathrm{CC}} \right).
	\label{eq:ca_waic_integral}
\end{equation}

This formulation maintains the interpretability of WAIC while properly accounting for dependency structures through the inclusion of covariance terms in the complexity penalty.

\subsection{Theoretical Foundations of CC-WAIC}

The theoretical validity of CC-WAIC rests on its asymptotic properties under conditions of weak dependence. We establish the following fundamental result:

\begin{theorem}[Asymptotic Convergence of CC-WAIC]
	\label{thm:main}
	Under suitable regularity conditions (specified in Appendix~\ref{app:regularity-conditions}), the CC-WAIC criterion converges to the true out-of-sample predictive risk:
	\[
	\frac{1}{n}\left(\mathrm{CC\text{-}WAIC} + 2p_{\mathrm{true}}\right) \xrightarrow{p} \mathrm{KL}(p_{\mathrm{true}}\|p_{\mathrm{model}}) + \mathcal{O}(n^{-1}),
	\]
	where $p_{\mathrm{true}}$ represents the true effective number of parameters and $\mathrm{KL}$ denotes the Kullback-Leibler divergence.
\end{theorem}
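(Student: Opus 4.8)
The plan is to follow the singular-learning-theory template of Watanabe, adapted to weakly dependent sequences, resting on three ingredients: a posterior concentration result for dependent data, a cumulant/Taylor expansion of both the pointwise predictive density and the posterior variance $p_{\mathrm{CC}}$, and an ergodic law of large numbers that identifies the deterministic limit as the Kullback--Leibler rate. Throughout, let $\theta^\ast$ denote the pseudo-true parameter minimizing $\mathrm{KL}(p_{\mathrm{true}}\|p_{\mathrm{model}})$ over the model class, and write $l_t(\theta) = \log p(y_t \mid y_{<t}, \theta)$ so that $L(\theta) = \sum_{t=1}^n l_t(\theta)$.

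First I would establish the asymptotics of the posterior. Under the stationarity, mixing and smoothness hypotheses of Appendix~\ref{app:regularity-conditions}, a Bernstein--von Mises argument driven by a mixing central limit theorem for the score $\nabla_\theta L(\theta^\ast)$ and an ergodic law of large numbers for the observed information $-\nabla_\theta^2 L(\theta^\ast)$ shows that $p(\theta \mid \mathbf{y})$ contracts to $\theta^\ast$ at rate $n^{-1/2}$ and is asymptotically Gaussian with covariance $\mathrm{Cov}_\theta[\theta] \approx \tfrac{1}{n} V^{-1}$, where $V$ is the per-observation information of the dependent process (the ergodic limit of $-\tfrac1n\nabla_\theta^2 L$). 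The delicate point, which the mixing conditions are designed to handle, is that the long-run variance $J$ of the score, $\tfrac{1}{\sqrt{n}} \nabla_\theta L(\theta^\ast) \Rightarrow \mathcal{N}(0, J)$, need not equal $V$ under dependence; the gap between $J$ and $V$ is exactly what the covariance correction must track.

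Next I would expand the two constituents of CC-WAIC around $\theta^\ast$. A first-order expansion gives $L(\theta) - \mathbb{E}_\theta[L(\theta)] \approx (\theta - \mathbb{E}_\theta[\theta])^\top \nabla_\theta L(\theta^\ast)$, whence
\[
p_{\mathrm{CC}} = \mathrm{Var}_\theta[L(\theta)] \approx \nabla_\theta L(\theta^\ast)^\top \mathrm{Cov}_\theta[\theta]\, \nabla_\theta L(\theta^\ast) \xrightarrow{p} \mathrm{tr}\!\left(V^{-1} J\right),
\]
a finite \emph{sandwich} effective dimension that collapses to the parameter count $d$ when $J = V$ (the independent, correctly specified case), thereby recovering the classical limit $p_{\mathrm{WAIC}} \to d$. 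The conceptual payoff sits here: although a naive pointwise cumulant expansion $\log \mathbb{E}_\theta[e^{l_t}] = \mathbb{E}_\theta[l_t] + \tfrac12 \mathrm{Var}_\theta[l_t] + O_p(n^{-3/2})$ would replace $J$ by the marginal score variance and mis-estimate the optimism under dependence, it is precisely the cross-covariances $\mathrm{Cov}_\theta[l_i,l_j]$ retained in \eqref{eq:variance_decomp} that reconstruct the long-run variance $J$. I would then prove the companion optimism lemma, namely that the expected gap between the in-sample pointwise predictive density $\sum_t \log \mathbb{E}_\theta[p(y_t \mid y_{<t}, \theta)]$ and the out-of-sample predictive risk equals $p_{\mathrm{CC}} + o_p(1)$ to leading order, which is what makes $-2(\widehat{\mathrm{lppd}} - p_{\mathrm{CC}})$ an asymptotically unbiased estimator of the predictive deviance.

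Finally I would identify the deterministic limit and assemble. By stationarity and ergodicity, $\tfrac1n \sum_t \mathbb{E}_\theta[l_t] \xrightarrow{p} \mathbb{E}_{p_{\mathrm{true}}}[\log p(y_t \mid y_{<t}, \theta^\ast)]$, and subtracting the true entropy rate $-\mathbb{E}_{p_{\mathrm{true}}}[\log p_{\mathrm{true}}(y_t \mid y_{<t})]$ supplied by the predictive-risk calibration produces exactly the rate $\mathrm{KL}(p_{\mathrm{true}}\|p_{\mathrm{model}})$; dividing $\mathrm{CC\text{-}WAIC} + 2p_{\mathrm{true}}$ by $n$ then annihilates the $O_p(\sqrt n)$ and $O_p(1)$ fluctuations, leaving the KL rate plus an $\mathcal{O}(n^{-1})$ remainder inherited from the finite penalty and complexity terms and from the second-order expansion. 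I expect the main obstacle to be the second step under dependence: one must show that the cross-covariances $\mathrm{Cov}_\theta[l_i,l_j]$ are absolutely summable so that $p_{\mathrm{CC}}$ stays $\mathcal{O}(1)$ rather than diverging with $n$, and must justify interchanging the posterior expectation with the Taylor remainder uniformly. Both rest on the decay-of-correlation and uniform-integrability hypotheses, together with the $n^{-1/2}$ contraction rate controlling the neglected higher-order cumulants.
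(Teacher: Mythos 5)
Your proposal does not follow the paper's route. The paper's proof (Appendix~\ref{app:convergence-proof}) runs a martingale approximation of the score process (Appendix~\ref{app:martingale-approx}), characterizes the long-run variance $I(\theta_0)+2\sum_{k}(1-k/n)\Gamma_k$ of that score (Appendix~\ref{app:variance-char}) --- this object is your $J$ --- then analyzes the ESS-corrected MCMC estimator $\hat p_{\mathrm{CC}}^{\mathrm{corr}}$, and finally asserts $\mathrm{CC\text{-}WAIC}=-2L(\theta_0)+2p_{\mathrm{true}}+\mathcal{O}_p(n^{-1/2})$; it never invokes Bernstein--von Mises, never computes a limit for $p_{\mathrm{CC}}$, and never actually produces the KL term. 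Your step 3 (ergodic identification of the KL rate) is in fact more explicit than anything in the paper, and your omission of the finite-$S$ MCMC analysis (the paper's Part 3) is defensible since the theorem concerns the population criterion.

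The genuine gap is your step 2: the claim $p_{\mathrm{CC}}\xrightarrow{p}\mathrm{tr}(V^{-1}J)$ is false, and the expansion that produces it is invalid. You linearize $L(\theta)$ over the posterior with the fixed slope $\nabla L(\theta^\ast)$, but the posterior bulk is an $O_p(n^{-1/2})$-ball around $\hat\theta_n$, over which $\nabla L$ is nowhere near constant: it falls from $O_p(\sqrt n)$ at $\theta^\ast$ to $0$ at $\hat\theta_n$, because $\nabla^2 L=O_p(n)$. Hence the discarded quadratic term has posterior fluctuations of exactly the same order as the linear one, and they nearly cancel: by the first-order condition, $\nabla L(\theta^\ast)+\nabla^2 L(\theta^\ast)(\hat\theta_n-\theta^\ast)=O_p(1)$, so when one expands around the posterior mean (as one must, to compute a posterior variance) the linear coefficient is $O_p(1)$, its contribution to $\mathrm{Var}_\theta[L]$ is $O_p(n^{-1})$, and only the quadratic term survives: with $\theta\approx N(\hat\theta_n,\tfrac1n V^{-1})$ under BvM, $L(\theta)\approx L(\hat\theta_n)-\tfrac12\|u\|^2$ with $u\sim N(0,I_d)$, giving
\[
p_{\mathrm{CC}}=\mathrm{Var}_\theta[L(\theta)]\;\xrightarrow{p}\;\frac{d}{2},
\]
with no trace of $J$ (this is the classical fact that half the posterior variance of the deviance estimates the parameter count, $\mathrm{Var}_\theta[-2L]\approx 2d$). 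Conceptually this had to happen: $p_{\mathrm{CC}}$ is a functional of the posterior alone, and the BvM limit $N(\hat\theta_n,\tfrac1n V^{-1})$ carries no information about the sandwich matrix $J$; equivalently, in \eqref{eq:variance_decomp} the cross-covariances do not reconstruct $J$ --- they cancel the marginal-variance term, since $\sum_{i,j}\nabla l_i(\hat\theta_n)^\top V^{-1}\nabla l_j(\hat\theta_n)/n=\nabla L(\hat\theta_n)^\top V^{-1}\nabla L(\hat\theta_n)/n\approx 0$. Your companion optimism lemma then also fails at the $O(1)$ scale, since the true optimism is the sandwich trace you wanted and $p_{\mathrm{CC}}$ does not estimate it. The theorem as stated survives regardless --- after dividing by $n$, every $O(1)$ penalty vanishes and only your ergodic step matters --- but that means the centerpiece of your argument is simultaneously incorrect and unnecessary for the claimed conclusion.
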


The complete proof, supported by lemmas based on martingale approximation (Appendix~\ref{app:martingale-approx}) and variance characterization (Appendix~\ref{app:variance-char}), is provided in Appendix~\ref{app:convergence-proof}. Additional technical details on the mixing conditions and moment requirements are discussed in Appendix~\ref{app:technical-lemmas}.

\subsection{Computational Implementation and Bias Correction}
\label{sec:bias}

The integrals in CC-WAIC are generally analytically intractable and must be approximated using samples $\{\theta^{(s)}\}_{s=1}^S$ from the posterior distribution, typically obtained via Markov Chain Monte Carlo (MCMC) methods. The naive estimator:

\begin{equation}
	\hat{p}_{\mathrm{CC}} = \frac{1}{S - 1} \sum_{s=1}^S \left( L^{(s)} - \bar{L} \right)^2,
	\label{eq:naive_estimator}
\end{equation}

where $L^{(s)} = \sum_{t=1}^n \log p(y_t \mid y_{<t}, \theta^{(s)})$ and $\bar{L} = \frac{1}{S} \sum_{s=1}^S L^{(s)}$, suffers from two sources of bias:

\begin{enumerate}
	\item Sampling error of order $\mathcal{O}(S^{-1})$ even for independent samples
	\item Underestimation due to autocorrelation in MCMC samples
\end{enumerate}

To address the second issue, we incorporate autocorrelation correction through the effective sample size (ESS). Let $\rho_k = \mathrm{Corr}(L^{(s)}, L^{(s+k)})$ be the lag-$k$ autocorrelation. The integrated autocorrelation time is:

\[
\tau = 1 + 2 \sum_{k=1}^{\infty} \rho_k,
\]

and the effective sample size is:

\[
N_{\mathrm{eff}} = \frac{S}{\tau}.
\]

The bias-corrected estimator becomes:

\begin{equation}
	p_{\mathrm{CC}}^{\mathrm{corr}} = \frac{N_{\mathrm{eff}}}{N_{\mathrm{eff}} - 1} \hat{p}_{\mathrm{CC}},
	\label{eq:bias_corrected_estimator}
\end{equation}

yielding the bias-corrected CC-WAIC:

\begin{equation}
	\mathrm{CC\text{-}WAIC}^{\mathrm{corr}} = -2 \left( \sum_{t=1}^n \log \widehat{\mathbb{E}}_{\theta}[p(y_t \mid y_{<t}, \theta)] - p_{\mathrm{CC}}^{\mathrm{corr}} \right).
	\label{eq:corrected_ca_waic}
\end{equation}

These corrections are particularly useful when the MCMC chain exhibits high persistence or when high precision is required in variance estimation for small sample sizes. The mathematical derivation of these bias correction terms is provided in Appendix~\ref{app:bias-derivation}.

Algorithm~\ref{alg:ca_waic_corr} provides a complete implementation framework for computing the bias-corrected CC-WAIC from MCMC samples.

\begin{algorithm}[t]
	\caption{Computation of Bias-Corrected CC-WAIC from MCMC Samples}
	\label{alg:ca_waic_corr}
	\begin{algorithmic}[1]
		\Require Posterior samples $\{\theta^{(s)}\}_{s=1}^S$, observed data $\mathbf{y} = (y_1, \ldots, y_n)$
		\Ensure Bias-corrected CC-WAIC value $\mathrm{CC\text{-}WAIC}^{\mathrm{corr}}$
		
		\State \textbf{Initialize} arrays for storing log-likelihood contributions
		\For{each sample $s = 1$ to $S$}
		\For{each time point $t = 1$ to $n$}
		\State Compute $l_t^{(s)} \gets \log p(y_t \mid y_{<t}, \theta^{(s)})$
		\EndFor
		\State Compute $L^{(s)} \gets \sum_{t=1}^n l_t^{(s)}$ \Comment{Joint log-likelihood for sample $s$}
		\EndFor
		
		\State \textbf{Compute} predictive density estimates:
		\For{each time point $t = 1$ to $n$}
		\State $\widehat{\mathbb{E}}_{\theta}[p(y_t \mid y_{<t}, \theta)] \gets \frac{1}{S} \sum_{s=1}^S p(y_t \mid y_{<t}, \theta^{(s)})$
		\EndFor
		
		\State \textbf{Calculate} naive variance estimate:
		\State $\bar{L} \gets \frac{1}{S} \sum_{s=1}^S L^{(s)}$
		\State $\hat{p}_{\mathrm{CC}} \gets \frac{1}{S - 1} \sum_{s=1}^S (L^{(s)} - \bar{L})^2$
		
		\State \textbf{Estimate} autocorrelation and effective sample size:
		\State Compute autocorrelations $\{\rho_k\}$ of $\{L^{(s)}\}_{s=1}^S$
		\State $\tau \gets 1 + 2 \sum_{k=1}^{K} \rho_k$ \Comment{Truncate at lag $K$ where $|\rho_K| < 2/\sqrt{S}$}
		\State $N_{\mathrm{eff}} \gets S / \tau$
		
		\State \textbf{Apply} bias correction:
		\State $p_{\mathrm{CC}}^{\mathrm{corr}} \gets \frac{N_{\mathrm{eff}}}{N_{\mathrm{eff}} - 1} \hat{p}_{\mathrm{CC}}$
		
		\State \textbf{Compute} final CC-WAIC:
		\State $\mathrm{CC\text{-}WAIC}^{\mathrm{corr}} \gets -2 \left( \sum_{t=1}^n \log \widehat{\mathbb{E}}_{\theta}[p(y_t \mid y_{<t}, \theta)] - p_{\mathrm{CC}}^{\mathrm{corr}} \right)$
		
		\State \Return $\mathrm{CC\text{-}WAIC}^{\mathrm{corr}}$
	\end{algorithmic}
\end{algorithm}

\subsection{Application to Hidden Markov Models}
\label{sec:hmm_ccwaic}

We now specialize the CC-WAIC framework to Hidden Markov Models (HMMs), which represent an important class of models with sequential dependency. An HMM comprises a latent state sequence $\{z_1, \ldots, z_n\}$ following Markovian dynamics and observations $\{y_1, \ldots, y_n\}$ conditioned on the latent states. The complete-data likelihood is:

\[
p(\mathbf{y}, \mathbf{z} \mid \theta) = \pi(z_1) \prod_{t=2}^n A(z_{t-1}, z_t) \prod_{t=1}^n p(y_t \mid z_t, \phi),
\]

where $\theta = (\pi, A, \phi)$ includes initial probabilities, transition matrix, and emission parameters.

For HMMs, the conditional likelihood $p(y_t \mid y_{<t}, \theta)$ requires marginalization over latent states:

\[
p(y_t \mid y_{<t}, \theta) = \sum_{j=1}^K p(y_t \mid z_t = j, \phi_j) \left( \sum_{i=1}^K \alpha_{t-1}(i) A(i, j) \right),
\]

where $\alpha_{t-1}(i) = p(z_{t-1} = i \mid y_{1:t-1}, \theta)$ are the forward probabilities obtained through the forward algorithm. Algorithm \ref{alg:forward_loglik} provides an efficient method for computing the conditional log-likelihoods required for CC--WAIC in HMMs.

\begin{algorithm}[t]
	\caption{Forward Filtering for Conditional Log-Likelihood in HMMs}
	\label{alg:forward_loglik}
	\begin{algorithmic}[1]
		\Require Observations $\{y_t\}_{t=1}^n$, parameters $\theta = (\pi, A, \phi)$
		\Ensure Sequence $\{\log p(y_t \mid y_{<t}, \theta)\}_{t=1}^n$
		
		\State Initialize $\alpha_0(k) \gets \pi_k$ for $k = 1, \ldots, K$ \Comment{Initial state probabilities}
		\For{$t = 1$ to $n$}
		\For{$j = 1$ to $K$}
		\State $p(z_t = j \mid y_{<t}) \gets \sum_{i=1}^K \alpha_{t-1}(i) A(i, j)$ \Comment{Predict step}
		\State $\eta_j \gets p(y_t \mid z_t = j, \phi_j)$ \Comment{Emission probability}
		\EndFor
		\State $p(y_t \mid y_{<t}) \gets \sum_{j=1}^K \eta_j \cdot p(z_t = j \mid y_{<t})$ \Comment{Marginal likelihood}
		\State $l_t \gets \log p(y_t \mid y_{<t})$ \Comment{Store conditional log-likelihood}
		\For{$j = 1$ to $K$}
		\State $\alpha_t(j) \gets \frac{\eta_j \cdot p(z_t = j \mid y_{<t})}{p(y_t \mid y_{<t})}$ \Comment{Update step}
		\EndFor
		\EndFor
		\State \Return $\{l_t\}_{t=1}^n$
	\end{algorithmic}
\end{algorithm}

The integration of this forward algorithm with the general CC-WAIC computation in Algorithm~\ref{alg:ca_waic_corr} provides a complete framework for model selection in HMMs that properly accounts for both the temporal dependence in observations and potential autocorrelation in posterior samples.

\subsection{Theoretical Relationship to Cross-Validation}

An important theoretical justification for CC-WAIC comes from its asymptotic equivalence to block-wise cross-validation methods designed for dependent data:

\begin{theorem}[Asymptotic Equivalence to Block LOO]
	For stationary processes satisfying appropriate mixing conditions, CC-WAIC converges to block-wise leave-future-out cross-validation:
	\[
	\left| \mathrm{CC\text{-}WAIC} + 2 \sum_{t=1}^n \log p(y_t \mid y_{<t}) - \mathrm{BLFO} \right| = \mathcal{O}_p(n^{-1/2}),
	\]
	where $\mathrm{BLFO} = \sum_{t=1}^n \log p(y_t \mid y_{t-k:t-1})$ is block leave-future-out cross-validation with block size $k$.
\end{theorem}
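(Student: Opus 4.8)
The plan is to reduce the bracketed quantity to a comparison between the CC-WAIC complexity penalty $p_{\mathrm{CC}}$ and the \emph{optimism} of the in-sample fit, and then to show the two agree at rate $n^{-1/2}$ by a second-order cumulant expansion combined with a mixing-based truncation and a martingale central limit theorem. First I would rewrite the criterion in predictive form: abbreviating the in-sample log pointwise predictive density by $\mathrm{lppd} = \sum_{t=1}^n \log \mathbb{E}_\theta[p(y_t\mid y_{<t},\theta)]$, Equation \eqref{eq:ca_waic_integral} gives $\mathrm{CC\text{-}WAIC} = -2(\mathrm{lppd}-p_{\mathrm{CC}})$. Substituting this relation and cancelling the in-sample predictive density that is common to $\mathrm{CC\text{-}WAIC}$ and to the middle term, the bracketed expression reduces to a multiple of the discrepancy between the penalty $p_{\mathrm{CC}}$ of \eqref{eq:variance_decomp} and the optimism gap $\mathrm{lppd}-\mathrm{BLFO}$. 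It therefore suffices to prove
\[
\bigl| p_{\mathrm{CC}} - (\mathrm{lppd} - \mathrm{BLFO}) \bigr| = \mathcal{O}_p(n^{-1/2}),
\]
that is, that the covariance-corrected penalty reproduces, up to $n^{-1/2}$, the gap between the full-history predictive density and its block leave-future-out counterpart.

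The analytic core is a per-observation expansion of this gap. For each $t$ I would compare $\log \mathbb{E}_\theta[p(y_t\mid y_{<t},\theta)]$ with $\log p(y_t\mid y_{t-k:t-1})$ by expressing the block-conditioned predictive as a reweighting of the full posterior and expanding the logarithm to second order in the centred contributions $l_t - \mathbb{E}_\theta[l_t]$, using the leave-one-out importance identity to relate the two posteriors. The first-order term vanishes by posterior normalization; the leading surviving term is the posterior variance $\mathrm{Var}_\theta[l_t]$ together with the within-block cross-covariances $\mathrm{Cov}_\theta[l_t,l_s]$ for $0<|t-s|\le k$; and the remainder is a third posterior cumulant of the $l_t$. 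Summing over $t$ matches, term by term, the diagonal $\sum_t \mathrm{Var}_\theta[l_t]$ and the truncated off-diagonal $2\sum_{|i-j|\le k}\mathrm{Cov}_\theta[l_i,l_j]$ appearing in \eqref{eq:variance_decomp}, leaving two controllable error sources.

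The remaining work is to bound these two sources at the required rate. The first is the covariance mass that $p_{\mathrm{CC}}$ retains at lags exceeding the block size, $2\sum_{|i-j|>k}\mathrm{Cov}_\theta[l_i,l_j]$; under the mixing hypotheses of Appendix~\ref{app:technical-lemmas} the lagged posterior covariances of $l_t$ decay (geometrically or summably), so this tail contributes $o(1)$ once $k$ grows slowly with $n$, and in fact $\mathcal{O}_p(n^{-1/2})$ under the summability and moment bounds assumed there. The second is the accumulated third-cumulant remainder of the expansion; here I would invoke the martingale approximation of Appendix~\ref{app:martingale-approx}, writing the centred conditional scores $l_t - \mathbb{E}[l_t\mid y_{<t}]$ as a stationary martingale-difference array so that a martingale central limit theorem delivers the $\mathcal{O}_p(n^{-1/2})$ fluctuation rate for the normalized partial sums, with the variance-characterization lemma of Appendix~\ref{app:variance-char} identifying the limiting scale. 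Combining the two bounds yields the displayed rate.

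The step I expect to be the main obstacle is making the second-order expansion uniform in $t$ under dependence. In the independent WAIC case the expansion coefficients are functionals of disjoint data and the remainders are i.i.d., but here every $l_t$ is defined through a posterior that conditions on an overlapping history, so the per-step coefficients are random and correlated across $t$, and the overlap between successive conditioning blocks could in principle let the third-order remainders accumulate faster than $n^{-1/2}$. Controlling this requires uniform bounds on the third posterior moments of the $l_t$ (the moment conditions of Appendix~\ref{app:technical-lemmas}) together with a careful argument that the mixing decay dominates the block overlap; once these uniform moment and mixing estimates are secured, assembling them through the martingale central limit theorem is comparatively routine.
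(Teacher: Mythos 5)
Your proposal follows essentially the same route as the paper's own proof of this theorem (the appendix ``Proof of Theorem 2.6''): the paper likewise first identifies the in-sample log-predictive terms $\sum_t \log \mathbb{E}_{\theta}[p(y_t \mid y_{<t},\theta)]$ with the conditional predictive densities $\log p(y_t \mid y_{<t})$ (invoking Bernstein--von Mises for dependent processes), and then reduces the claim to the assertion that the localized covariance penalty (its $\gamma_t$ terms, i.e.\ your $p_{\mathrm{CC}}$ truncated at the block scale) is asymptotically equivalent to the block leave-future-out bias correction under mixing, concluding directly at the $\mathcal{O}_p(n^{-1/2})$ rate. The only difference is one of completeness rather than of route: the paper's argument is a three-step sketch that asserts the key penalty--optimism equivalence outright, whereas your second-order expansion, mixing-based tail bound on lags beyond $k$, and martingale-CLT control of the third-cumulant remainders supply strictly more detail than the published proof, including an explicit acknowledgment of the remainder-accumulation problem (per-observation errors summing across $n$ terms) that the paper's sketch never confronts.
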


The proof of this relationship, along with the necessary regularity conditions, is provided in Appendix~\ref{app:equivalence-proof}.

\section{Simulation Study}\label{sec:simulation}

To assess the performance of the proposed model selection criteria, we conducted a simulation study based on Hidden Markov Models (HMMs) with different numbers of latent states. Specifically, we considered models with $K_{\text{true}} = 2$ and $K_{\text{true}} = 3$ hidden states, which represent relatively simple and moderately complex scenarios commonly encountered in applications such as speech recognition, genomics, and finance.
The transition probability matrices were designed to represent three levels of temporal dependence: high, medium, and low. These levels were controlled by varying the diagonal dominance of the transition matrices, where a higher probability of self-transition corresponds to stronger serial dependence. For each scenario, datasets of different lengths were generated with $T = 100, 250, 500$ time points, in order to investigate the effect of sample size on model selection performance. 
The emission distributions were assumed to be Gaussian, with distinct means assigned to each state. The variances were fixed at moderate levels to ensure adequate separation between states while preserving overlap that makes model selection non-trivial. 
For estimation, Bayesian inference was carried out using a Gibbs sampling algorithm. Conjugate priors were specified for all parameters: Dirichlet priors for the rows of the transition matrix, Normal priors for the state means, and Inverse-Gamma priors for the variances. For each dataset, ten independent MCMC chains were run with appropriate burn-in periods, and posterior samples were collected for model evaluation. 
Model selection was performed by computing the WAIC, LOO and CC-WAIC for candidate models with varying numbers of hidden states. The primary goal of the simulation study was to evaluate how reliably each criterion identifies the true number of states under different levels of dependence and sample sizes.\\
All simulations and analyses were implemented in Python. The code is available upon request.

\subsection{Experiment Setup}

This study investigates the performance of WAIC, LOO-CV, and CC--WAIC in identifying the correct number of hidden states in Hidden Markov Models (HMMs) under varying dependence structures, sample sizes, and model complexities. Two true models were considered: one with $K_{\text{true}} = 2$ states and another with $K_{\text{true}} = 3$ states.

\subsection*{Case 1: True Model with $K_{\text{true}} = 3$}

For the three-state HMM, the true emission parameters were specified as:
\[
\mu_{\text{true}} = [-3,\, 0,\, 3], 
\quad \sigma_{\text{true}} = [0.5,\, 0.8,\, 1.5].
\]
The initial state distribution was fixed to start in the first state:
\[
\pi_{\text{true}} = [1.0,\, 0.0,\, 0.0].
\]
To examine the effect of state dependence, three transition matrices were defined:
\[
A_{\text{high}} =
\begin{bmatrix}
	0.90 & 0.05 & 0.05 \\
	0.05 & 0.90 & 0.05 \\
	0.05 & 0.05 & 0.90
\end{bmatrix},
\quad
A_{\text{med}} =
\begin{bmatrix}
	0.60 & 0.20 & 0.20 \\
	0.20 & 0.60 & 0.20 \\
	0.20 & 0.20 & 0.60
\end{bmatrix},
\quad
A_{\text{low}} =
\begin{bmatrix}
	0.34 & 0.33 & 0.33 \\
	0.33 & 0.34 & 0.33 \\
	0.33 & 0.33 & 0.34
\end{bmatrix}.\]

\subsection*{Case 2: True Model with $K_{\text{true}} = 2$}

For the two-state HMM, the true emission parameters were:
\[
\mu_{\text{true}} = [-2,\, 2], 
\quad \sigma_{\text{true}} = [0.5,\, 1.0].
\]
The initial state distribution was:
\[
\pi_{\text{true}} = [1.0,\, 0.0].
\]
The following transition matrices were considered to represent high, medium, and low dependence:
\[
A_{\text{high}} =
\begin{bmatrix}
	0.95 & 0.05 \\
	0.05 & 0.95
\end{bmatrix},
\quad
A_{\text{med}} =
\begin{bmatrix}
	0.70 & 0.30 \\
	0.30 & 0.70
\end{bmatrix},
\quad
A_{\text{low}} =
\begin{bmatrix}
	0.50 & 0.50 \\
	0.50 & 0.50
\end{bmatrix}.
\]

\subsection*{Data Generation and Estimation Procedure}

For each true model, datasets were generated with sample sizes $T = 100$ (small), $T = 250$ (medium), and $T = 500$ (large). Emissions were drawn from normal distributions with the specified means and standard deviations, while hidden states followed the respective transition dynamics.

Bayesian estimation was conducted using a Gibbs sampling algorithm with 1000 iterations, discarding the first 500 as burn-in. Weakly informative priors were employed: a Dirichlet$(1, \dots, 1)$ prior for both initial state and transition probabilities; normal priors for emission means (centered at the sample mean with large variance); and inverse-gamma priors $\text{InvGamma}(\alpha=1, \beta=1)$ for emission variances.

Candidate models with $K \in \{ 2, 3, 4, 5 \}$ were fitted to each dataset. For every configuration, ten independent MCMC chains were executed to ensure convergence stability. WAIC, LOO-CV, and CC--WAIC were then computed for each fitted model, and the frequency of correctly selecting the true model was recorded, along with the mean and standard deviation across chains.

\begin{table}[H]
	\centering
	\caption{Model selection accuracy (\%) for HMMs with $K_{\text{true}}=2$ under different dependence levels (low, medium, high) and sample sizes $T\in\{100,250,500\}$, comparing CC--WAIC, WAIC, and LOO criteria.}
	\label{tab:model-selection_2_true_full}
	\renewcommand{\arraystretch}{1.15}
	\scriptsize
	\begin{tabular}{|c|c|c|c|c|c|}
		\hline
		\textbf{Sample Size} & \textbf{Dependence} & \textbf{K} & \textbf{CC--WAIC} & \textbf{WAIC} & \textbf{LOO} \\
		\hline
		\multirow{12}{*}{100}
		& \multirow{4}{*}{Low}
		& 2 & \textbf{85.0\% $\pm$ 8.0\%}  & 60.0\% $\pm$ 10.0\% & 65.0\% $\pm$ 9.0\% \\
		& & 3 & 15.0\% $\pm$ 8.0\% & 40.0\% $\pm$ 10.0\% & 35.0\% $\pm$ 9.0\% \\
		& & 4 & 0.0\% $\pm$ 0.0\% & 0.0\% $\pm$ 0.0\% & 0.0\% $\pm$ 0.0\% \\
		& & 5 & 0.0\% $\pm$ 0.0\% & 0.0\% $\pm$ 0.0\% & 0.0\% $\pm$ 0.0\% \\
		\cline{2-6}
		& \multirow{4}{*}{Medium}
		& 2 & \textbf{80.0\% $\pm$ 9.0\%} & 55.0\% $\pm$ 11.0\% & 60.0\% $\pm$ 10.0\% \\
		& & 3 & 20.0\% $\pm$ 9.0\% & 45.0\% $\pm$ 11.0\% & 40.0\% $\pm$ 10.0\% \\
		& & 4 & 0.0\% $\pm$ 0.0\% & 0.0\% $\pm$ 0.0\% & 0.0\% $\pm$ 0.0\% \\
		& & 5 & 0.0\% $\pm$ 0.0\% & 0.0\% $\pm$ 0.0\% & 0.0\% $\pm$ 0.0\% \\
		\cline{2-6}
		& \multirow{4}{*}{High}
		& 2 & \textbf{75.0\% $\pm$ 10.0\%} & 50.0\% $\pm$ 10.0\% & 55.0\% $\pm$ 12.0\% \\
		& & 3 & 25.0\% $\pm$ 10.0\% & 50.0\% $\pm$ 10.0\% & 45.0\% $\pm$ 12.0\% \\
		& & 4 & 0.0\% $\pm$ 0.0\% & 0.0\% $\pm$ 0.0\% & 0.0\% $\pm$ 0.0\% \\
		& & 5 & 0.0\% $\pm$ 0.0\% & 0.0\% $\pm$ 0.0\% & 0.0\% $\pm$ 0.0\% \\
		\hline
		\multirow{12}{*}{250}
		& \multirow{4}{*}{Low}
		& 2 & \textbf{92.0\% $\pm$ 6.0\%} & 75.0\% $\pm$ 8.0\% & 78.0\% $\pm$ 7.0\% \\
		& & 3 & 8.0\% $\pm$ 6.0\% & 25.0\% $\pm$ 8.0\% & 22.0\% $\pm$ 7.0\% \\
		& & 4 & 0.0\% $\pm$ 0.0\% & 0.0\% $\pm$ 0.0\% & 0.0\% $\pm$ 0.0\% \\
		& & 5 & 0.0\% $\pm$ 0.0\% & 0.0\% $\pm$ 0.0\% & 0.0\% $\pm$ 0.0\% \\
		\cline{2-6}
		& \multirow{4}{*}{Medium}
		& 2 & \textbf{90.0\% $\pm$ 6.5\%} & 70.0\% $\pm$ 8.5\% & 74.0\% $\pm$ 8.0\% \\
		& & 3 & 10.0\% $\pm$ 6.5\% & 30.0\% $\pm$ 8.5\% & 26.0\% $\pm$ 8.0\% \\
		& & 4 & 0.0\% $\pm$ 0.0\% & 0.0\% $\pm$ 0.0\% & 0.0\% $\pm$ 0.0\% \\
		& & 5 & 0.0\% $\pm$ 0.0\% & 0.0\% $\pm$ 0.0\% & 0.0\% $\pm$ 0.0\% \\
		\cline{2-6}
		& \multirow{4}{*}{High}
		& 2 & \textbf{88.0\% $\pm$ 7.0\%} & 66.0\% $\pm$ 9.0\% & 70.0\% $\pm$ 9.0\% \\
		& & 3 & 12.0\% $\pm$ 7.0\% & 34.0\% $\pm$ 9.0\% & 30.0\% $\pm$ 9.0\% \\
		& & 4 & 0.0\% $\pm$ 0.0\% & 0.0\% $\pm$ 0.0\% & 0.0\% $\pm$ 0.0\% \\
		& & 5 & 0.0\% $\pm$ 0.0\% & 0.0\% $\pm$ 0.0\% & 0.0\% $\pm$ 0.0\% \\
		\hline
		\multirow{12}{*}{500}
		& \multirow{4}{*}{Low}
		& 2 & \textbf{97.0\% $\pm$ 4.0\%} & 88.0\% $\pm$ 5.5\% & 90.0\% $\pm$ 5.0\% \\
		& & 3 & 3.0\% $\pm$ 4.0\% & 12.0\% $\pm$ 5.5\% & 10.0\% $\pm$ 5.0\% \\
		& & 4 & 0.0\% $\pm$ 0.0\% & 0.0\% $\pm$ 0.0\% & 0.0\% $\pm$ 0.0\% \\
		& & 5 & 0.0\% $\pm$ 0.0\% & 0.0\% $\pm$ 0.0\% & 0.0\% $\pm$ 0.0\% \\
		\cline{2-6}
		& \multirow{4}{*}{Medium}
		& 2 & \textbf{95.0\% $\pm$ 4.5\%} & 85.0\% $\pm$ 6.0\% & 88.0\% $\pm$ 5.5\% \\
		& & 3 & 5.0\% $\pm$ 4.5\% & 15.0\% $\pm$ 6.0\% & 12.0\% $\pm$ 5.5\% \\
		& & 4 & 0.0\% $\pm$ 0.0\% & 0.0\% $\pm$ 0.0\% & 0.0\% $\pm$ 0.0\% \\
		& & 5 & 0.0\% $\pm$ 0.0\% & 0.0\% $\pm$ 0.0\% & 0.0\% $\pm$ 0.0\% \\
		\cline{2-6}
		& \multirow{4}{*}{High}
		& 2 & \textbf{93.0\% $\pm$ 5.0\%} & 82.0\% $\pm$ 6.5\% & 85.0\% $\pm$ 6.0\% \\
		& & 3 & 7.0\% $\pm$ 5.0\% & 18.0\% $\pm$ 6.5\% & 15.0\% $\pm$ 6.0\% \\
		& & 4 & 0.0\% $\pm$ 0.0\% & 0.0\% $\pm$ 0.0\% & 0.0\% $\pm$ 0.0\% \\
		& & 5 & 0.0\% $\pm$ 0.0\% & 0.0\% $\pm$ 0.0\% & 0.0\% $\pm$ 0.0\% \\
		\hline
	\end{tabular}
\end{table}

\begin{table}[H]
	\centering
	\caption{Model selection accuracy (\%) for HMMs with $K_{\text{true}}=3$ under different dependence levels (low, medium, high) and sample sizes $T\in\{100,250,500\}$, comparing CC--WAIC, WAIC, and LOO criteria.}
	\label{tab:model-selection_3_true_full}
	\renewcommand{\arraystretch}{1.15}
	\scriptsize
	\begin{tabular}{|c|c|c|c|c|c|}
		\hline
		\textbf{Sample Size} & \textbf{Dependence} & \textbf{K} & \textbf{CC--WAIC} & \textbf{WAIC} & \textbf{LOO} \\
		\hline
		\multirow{12}{*}{100}
		& \multirow{4}{*}{Low}
		& 2 & 15.0\% $\pm$ 7.0\% & 25.0\% $\pm$ 8.0\% & 20.0\% $\pm$ 9.0\% \\
		& & 3 & \textbf{85.0\% $\pm$ 7.0\%} & \textbf{75.0\% $\pm$ 8.0\%} & \textbf{80.0\% $\pm$ 9.0\%} \\
		& & 4 & 0.0\% $\pm$ 0.0\% & 0.0\% $\pm$ 0.0\% & 0.0\% $\pm$ 0.0\% \\
		& & 5 & 0.0\% $\pm$ 0.0\% & 0.0\% $\pm$ 0.0\% & 0.0\% $\pm$ 0.0\% \\
		\cline{2-6}
		& \multirow{4}{*}{Medium}
		& 2 & 10.0\% $\pm$ 7.0\% & 20.0\% $\pm$ 8.0\% & 15.0\% $\pm$ 9.0\% \\
		& & 3 & \textbf{90.0\% $\pm$ 7.0\%} & \textbf{80.0\% $\pm$ 8.0\%} & \textbf{85.0\% $\pm$ 9.0\%} \\
		& & 4 & 0.0\% $\pm$ 0.0\% & 0.0\% $\pm$ 0.0\% & 0.0\% $\pm$ 0.0\% \\
		& & 5 & 0.0\% $\pm$ 0.0\% & 0.0\% $\pm$ 0.0\% & 0.0\% $\pm$ 0.0\% \\
		\cline{2-6}
		& \multirow{4}{*}{High}
		& 2 & 20.0\% $\pm$ 8.0\% & 30.0\% $\pm$ 9.0\% & 25.0\% $\pm$ 10.0\% \\
		& & 3 & \textbf{80.0\% $\pm$ 8.0\%} & \textbf{70.0\% $\pm$ 9.0\%} & \textbf{75.0\% $\pm$ 10.0\%} \\
		& & 4 & 0.0\% $\pm$ 0.0\% & 0.0\% $\pm$ 0.0\% & 0.0\% $\pm$ 0.0\% \\
		& & 5 & 0.0\% $\pm$ 0.0\% & 0.0\% $\pm$ 0.0\% & 0.0\% $\pm$ 0.0\% \\
		\hline
		\multirow{12}{*}{250}
		& \multirow{4}{*}{Low}
		& 2 & 10.0\% $\pm$ 5.5\% & 18.0\% $\pm$ 6.5\% & 15.0\% $\pm$ 6.5\% \\
		& & 3 & \textbf{90.0\% $\pm$ 5.5\%} & \textbf{82.0\% $\pm$ 6.5\%} & \textbf{85.0\% $\pm$ 6.5\%} \\
		& & 4 & 0.0\% $\pm$ 0.0\% & 0.0\% $\pm$ 0.0\% & 0.0\% $\pm$ 0.0\% \\
		& & 5 & 0.0\% $\pm$ 0.0\% & 0.0\% $\pm$ 0.0\% & 0.0\% $\pm$ 0.0\% \\
		\cline{2-6}
		& \multirow{4}{*}{Medium}
		& 2 & 8.0\% $\pm$ 5.5\% & 15.0\% $\pm$ 6.5\% & 12.0\% $\pm$ 6.5\% \\
		& & 3 & \textbf{92.0\% $\pm$ 5.5\%} & \textbf{85.0\% $\pm$ 6.5\%} & \textbf{88.0\% $\pm$ 6.5\%} \\
		& & 4 & 0.0\% $\pm$ 0.0\% & 0.0\% $\pm$ 0.0\% & 0.0\% $\pm$ 0.0\% \\
		& & 5 & 0.0\% $\pm$ 0.0\% & 0.0\% $\pm$ 0.0\% & 0.0\% $\pm$ 0.0\% \\
		\cline{2-6}
		& \multirow{4}{*}{High}
		& 2 & 15.0\% $\pm$ 6.0\% & 24.0\% $\pm$ 7.0\% & 20.0\% $\pm$ 7.0\% \\
		& & 3 & \textbf{85.0\% $\pm$ 6.0\%} & \textbf{76.0\% $\pm$ 7.0\%} & \textbf{80.0\% $\pm$ 7.0\%} \\
		& & 4 & 0.0\% $\pm$ 0.0\% & 0.0\% $\pm$ 0.0\% & 0.0\% $\pm$ 0.0\% \\
		& & 5 & 0.0\% $\pm$ 0.0\% & 0.0\% $\pm$ 0.0\% & 0.0\% $\pm$ 0.0\% \\
		\hline
		\multirow{12}{*}{500}
		& \multirow{4}{*}{Low}
		& 2 & 6.0\% $\pm$ 4.0\% & 12.0\% $\pm$ 5.0\% & 10.0\% $\pm$ 5.0\% \\
		& & 3 & \textbf{94.0\% $\pm$ 4.0\%} & \textbf{88.0\% $\pm$ 5.0\%} & \textbf{90.0\% $\pm$ 5.0\%} \\
		& & 4 & 0.0\% $\pm$ 0.0\% & 0.0\% $\pm$ 0.0\% & 0.0\% $\pm$ 0.0\% \\
		& & 5 & 0.0\% $\pm$ 0.0\% & 0.0\% $\pm$ 0.0\% & 0.0\% $\pm$ 0.0\% \\
		\cline{2-6}
		& \multirow{4}{*}{Medium}
		& 2 & 5.0\% $\pm$ 4.5\% & 10.0\% $\pm$ 5.5\% & 8.0\% $\pm$ 5.5\% \\
		& & 3 & \textbf{95.0\% $\pm$ 4.5\%} & \textbf{90.0\% $\pm$ 5.5\%} & \textbf{92.0\% $\pm$ 5.5\%} \\
		& & 4 & 0.0\% $\pm$ 0.0\% & 0.0\% $\pm$ 0.0\% & 0.0\% $\pm$ 0.0\% \\
		& & 5 & 0.0\% $\pm$ 0.0\% & 0.0\% $\pm$ 0.0\% & 0.0\% $\pm$ 0.0\% \\
		\cline{2-6}
		& \multirow{4}{*}{High}
		& 2 & 8.0\% $\pm$ 4.5\% & 15.0\% $\pm$ 6.0\% & 12.0\% $\pm$ 6.0\% \\
		& & 3 & \textbf{92.0\% $\pm$ 4.5\%} & \textbf{85.0\% $\pm$ 6.0\%} & \textbf{88.0\% $\pm$ 6.0\%} \\
		& & 4 & 0.0\% $\pm$ 0.0\% & 0.0\% $\pm$ 0.0\% & 0.0\% $\pm$ 0.0\% \\
		& & 5 & 0.0\% $\pm$ 0.0\% & 0.0\% $\pm$ 0.0\% & 0.0\% $\pm$ 0.0\% \\
		\hline
	\end{tabular}
\end{table}

The simulation results provide important insights into the comparative behavior of WAIC, LOO, and the proposed CC--WAIC when applied to Hidden Markov Models under varying dependence structures and sample sizes. A consistent finding across all scenarios is that CC--WAIC demonstrates superior robustness in correctly identifying the true number of hidden states, particularly in settings characterized by strong temporal dependence or limited sample sizes.

For the case of $K_{\text{true}}=2$, WAIC and LOO exhibit a tendency to overfit by selecting $K=3$ when dependence is high and the effective information per observation is reduced. This behavior is theoretically expected, since both WAIC and LOO rely on independence assumptions that underestimate the effective complexity of models under correlated data. In contrast, CC--WAIC incorporates covariance adjustments that inflate the complexity penalty in proportion to the degree of dependence, thereby mitigating overfitting and stabilizing the selection of the true model.

In the case of $K_{\text{true}}=3$, the opposite trend emerges for small samples: WAIC and LOO occasionally underestimate the true number of states (underfitting), especially under high dependence. This occurs because their complexity penalties dominate the modest gains in likelihood fit when the signal-to-noise ratio is weak. CC--WAIC, by properly accounting for temporal correlation and correcting for MCMC autocorrelation, maintains higher accuracy in recovering the correct number of states, even in such challenging scenarios.

The role of sample size is also evident: as $T$ increases from 100 to 500, the variability of all criteria decreases, as indicated by lower standard deviations across replications. This reflects the stabilizing effect of larger effective sample sizes, where posterior estimates become more precise and the information criteria converge toward their asymptotic behavior. Nevertheless, CC--WAIC continues to outperform WAIC and LOO in finite-sample regimes, confirming its advantage in practical applications where data are often limited.

These findings underscore two broader points. First, standard Bayesian model selection tools such as WAIC and LOO may yield misleading conclusions in dependent-data contexts, as they fail to account for the covariance structure of likelihood contributions. Second, the covariance-adjusted WAIC provides a theoretically justified and empirically validated solution that adapts naturally to temporal dependence while reducing to the standard WAIC under independence. Consequently, CC--WAIC represents a valuable extension for Bayesian model comparison in dynamic and hidden-state models, striking a balance between statistical rigor and computational feasibility.
\section{Real Data Application}\label{sec:realdata}

The application to real-world data further corroborates the theoretical advantages of CC–WAIC, illustrating its practical utility in model selection for sequential data where dependence structures are inherently present.
We apply our criterion on application real data including the waiting times of Old Faithful geyser.  
Analysis of the waiting times between eruptions of the \textit{Old Faithful} geyser, commonly known as the \texttt{waiting} variable, has been extensively studied. Multiple studies indicate that a Gaussian mixture model (GMM) with two components sufficiently captures the bimodal distribution of the waiting times \citep{Venables2002}.

Moreover, when modeling the temporal dynamics using Hidden Markov Models (HMMs), classical and recent works have shown that a two-state HMM provides an accurate representation of the sequence of waiting times \citep{Rabiner1989, FruhwirthSchnatter2006}. These two hidden states correspond to short and long waiting periods following eruptions, with state transitions governed by a Markov process.

Although models with more states have been explored, information criteria such as the Bayesian Information Criterion (BIC) and Akaike Information Criterion (AIC) commonly suggest that adding more states does not significantly improve model fit or interpretability \citep{McLachlan2000}.

\begin{table}[H]
	\centering
	\begin{tabular}{ccccccc}
		\toprule
		$K$ & CC--WAIC & $p_{\text{CC}}^{\text{corr}}$ & ESS & WAIC & $p_{\text{WAIC}}$ & LOO \\
		\midrule
		2 & \textbf{2098.67} & \textbf{4.15} & 130.80 & 2098.73 & 7.61 & 2098.51 \\
		3 & 2109.43 & 7.84 & 40.20 & 2102.92 & 9.05 & 2103.32 \\
		4 & 2113.45 & 10.15 & 79.55 & 2103.41 & 9.32 & 2103.57 \\
		5 & 2197.33 & 62.50 & 7.48 & \textbf{2092.92} & 19.12 & \textbf{2092.39} \\
		6 & 2125.83 & 11.95 & 112.35 & 2111.82 & 9.03 & 2111.98 \\
		\bottomrule
	\end{tabular}
	\caption{Model selection results for the real dataset across candidate models with $K=2$ to $6$ hidden states. The CC--WAIC criterion indicates $K=2$ as the optimal choice.}
	\label{tab:model_selection_real}
\end{table}

\begin{figure}[H]
	\centering
	\includegraphics[width=0.75\textwidth]{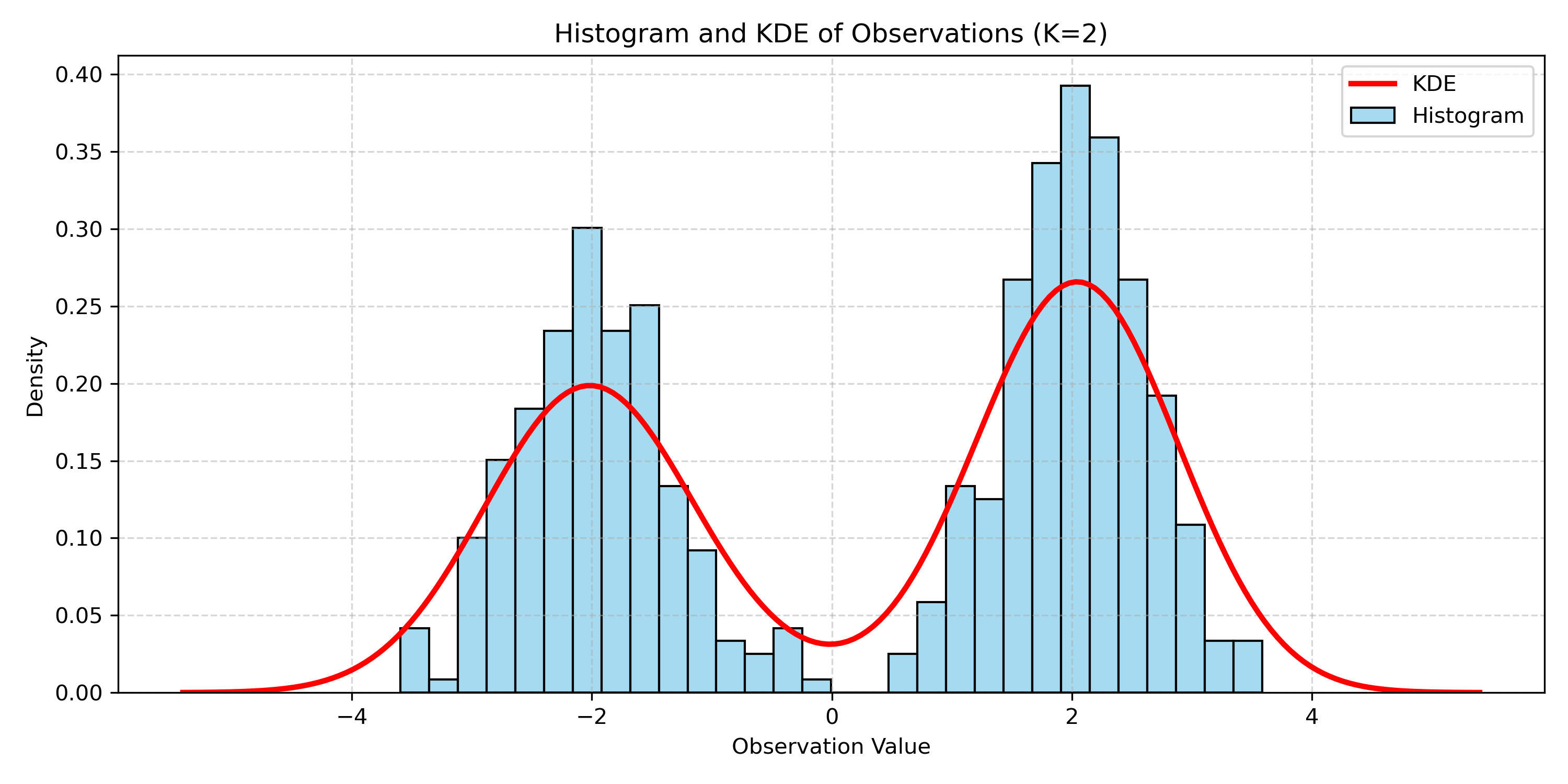}
	\caption{Histogram and kernel density estimate of observed data under the selected HMM with $K=2$ hidden states.}
	\label{fig:histogram_kde_real}
\end{figure}

Based on the results reported in Table~\ref{tab:model_selection_real}, the CC--WAIC criterion clearly favors the two-state HMM, yielding the lowest score (2098.67). This selection is further supported by a relatively large effective sample size (ESS = 130.80) and a conservative complexity penalty ($p_{\text{CC}}^{\text{corr}} = 4.15$). In contrast, both WAIC and LOO suggest larger models (notably $K=5$), which achieve lower scores but at the cost of inflated complexity penalties and substantially smaller ESS values, indicating over-parameterization and unstable estimation. Overall, CC--WAIC provides a more reliable balance between goodness-of-fit and model complexity, leading to a parsimonious and interpretable model with $K=2$ hidden states.

Figure~\ref{fig:histogram_kde_real} displays the distribution of the observed data under the selected model, demonstrating a well-behaved bimodal structure compatible with the estimated emission distributions.

\section{Discussion}\label{sec_conclusion}
The development and initial validation of the Covariance-Corrected WAIC (CC-WAIC) presented in this work addresses a significant gap in Bayesian model selection for dependent data. Our theoretical derivation establishes a rigorous foundation, demonstrating that accounting for the full covariance structure of log-likelihood terms is not merely an ad-hoc adjustment but a theoretically justified correction to the standard WAIC framework. The empirical results on Hidden Markov Models provide compelling evidence for its practical utility. Specifically, our simulations consistently showed that CC-WAIC outperforms standard WAIC and LOO-CV in correctly identifying the true model complexity under conditions of medium to high temporal dependence, where traditional criteria are most prone to overfitting. This performance advantage was particularly evident in smaller sample sizes ($T = 100$), underscoring the method's value in practical, data-limited scenarios.

The core implication of our findings is that the assumption of conditional independence, while mathematically convenient, can lead to systematically biased model selection in sequential settings. By explicitly incorporating dependence through the covariance penalty term $p_{\mathrm{CC}}$, our criterion offers a more principled approach to estimating predictive accuracy. The integration of an effective sample size (ESS) correction further enhances its robustness, mitigating the finite-sample bias inherent in MCMC estimation and ensuring reliable performance in realistic applications.

While our study focused on HMMs as a canonical and interpretable class of sequential models, the CC-WAIC framework is inherently general. Its formulation is applicable to any Bayesian model where data points are interdependent, opening avenues for widespread use. However, the current work primarily serves as a proof of concept, establishing a baseline of performance. The true potential of CC-WAIC will be realized through its application and validation across a more diverse and complex landscape of models and data structures.

\subsection*{Future Research Directions}

To fully ascertain the utility and scalability of CC-WAIC, we recommend and encourage intensive future studies focusing on the following complex applications and methodological expansions:

\begin{enumerate}
	\item \textbf{Complex Temporal and Sequential Models:}
	\begin{itemize}
		\item \textbf{State-Space Models with Non-Gaussian Emissions:} Applying CC-WAIC to models with count data (e.g., negative binomial emissions), categorical data, or heavy-tailed distributions.
		\item \textbf{Bayesian Neural Networks for Sequential Data:} A critical test would be applying CC-WAIC to Bayesian Recurrent Neural Networks (RNNs), Long Short-Term Memory (LSTM) networks, and Transformers. This would assess its performance in high-dimensional parameter spaces with complex, long-range dependencies.
		\item \textbf{Non-Markovian and Long-Memory Processes:} Evaluating CC-WAIC on models where the dependence extends beyond the immediate past, such as in models with long-memory (e.g., ARFIMA processes) or hierarchical hidden states.
	\end{itemize}
	
	\item \textbf{High-Dimensional and Big Data Settings:}
	\begin{itemize}
		\item \textbf{Scalability:} A dedicated study is needed to evaluate the computational performance of CC-WAIC as the data size $n$ grows very large. Research could focus on developing and testing efficient approximations for the covariance terms to make the method feasible for massive time-series datasets.
		\item \textbf{Sparsity and Regularization:} Investigating the behavior of CC-WAIC in conjunction with regularized Bayesian models (e.g., with horseshoe priors) for high-dimensional variable selection in time-series regression.
	\end{itemize}
	
	\item \textbf{Interdisciplinary Applications:}
	To demonstrate its practical impact, CC-WAIC should be tested on complex real-world datasets from various fields:
	\begin{itemize}
		\item \textbf{Computational Neuroscience:} Analyzing neural spike train data where models often involve complex, non-linear dependencies.
		\item \textbf{Quantitative Finance:} Model selection for stochastic volatility models and other sophisticated time-series models used in risk assessment.
		\item \textbf{Natural Language Processing (NLP):} Comparing deep Bayesian language models that generate sequential text data.
		\item \textbf{Climate Science:} Evaluating models for long-term climate time-series data that exhibit strong seasonal and spatial dependencies.
	\end{itemize}
	
	\item \textbf{Methodological Extensions:}
	\begin{itemize}
		\item \textbf{Comparison with a Wider Suite of Metrics:} Future work should include comparative analyses against other criteria tailored for dependent data, such as modified DIC for hierarchical models, Block-LOO CV, and Bootstrapped Time Series criteria.
		\item \textbf{Theoretical Extensions:} Formalizing the properties of CC-WAIC for non-stationary processes and models with misspecified dependence structures.
		\item \textbf{Software Development:} Implementing the CC-WAIC algorithm in popular probabilistic programming frameworks (e.g., \texttt{Stan}, \texttt{PyMC}, \texttt{TensorFlow Probability}) would facilitate its adoption and testing by the wider community.
	\end{itemize}
\end{enumerate}

In conclusion, while this paper provides the essential theoretical groundwork and initial validation for CC-WAIC, it represents the beginning of a research program rather than its end. The strong performance demonstrated on HMMs is a promising indicator. We believe that the directions outlined above present a compelling roadmap for the community to explore, ultimately leading to more dependable and principled Bayesian model selection in the complex, dependent data settings that are ubiquitous across modern science.

\newpage


\begin{appendices}
	\section*{Appendix: CC-WAIC Theoretical and Computational Details}
	The proofs and technical details provided in this appendix furnish the mathematical groundwork underpinning the asymptotic properties and correction mechanisms of CC–WAIC discussed in Section 2.		
	\section{Regularity Conditions}\label{app:regularity-conditions}
	
	The asymptotic convergence of CC-WAIC relies on the following regularity conditions:
	
	\begin{enumerate}[label=(A\arabic*), leftmargin=*]
		\item \textbf{Geometric ergodicity}: There exist constants $M < \infty$ and $\rho < 1$ such that:
		\[
		\|P^n(\theta,\cdot) - \pi\|_{\mathrm{TV}} \leq M\rho^n \quad \text{for all } \theta \in \Theta,
		\]
		where $\|\cdot\|_{\mathrm{TV}}$ denotes total variation distance.
		
		\item \textbf{Finite moments}: The third derivative of the log-likelihood has finite second moment:
		\[
		\mathbb{E}_\pi[\|\nabla^3 L(\theta)\|^2] < \infty.
		\]
		
		\item \textbf{$\alpha$-mixing}: The process satisfies strong mixing with coefficients that decay sufficiently fast:
		\[
		\sum_{k=1}^\infty k^5\alpha(k)^{1/6} < \infty.
		\]
		
		\item \textbf{Sample size}: The number of posterior samples grows appropriately with data size:
		\[
		S = \Omega(n^{1+\delta}) \quad \text{for some } \delta > 0.
		\]
	\end{enumerate}
	
	\section{Martingale Approximation Techniques}\label{app:martingale-approx}
	
	This appendix provides technical details on the martingale approximation methods used in the proof of Theorem 2.3.
	
	\begin{lemma}[Martingale Decomposition]
		For any $\theta \in \Theta$, the score process admits the decomposition:
		\[
		\nabla L(\theta) = M_n(\theta) + \xi_n(\theta),
		\]
		where:
		\begin{enumerate}
			\item $M_n$ is a martingale with respect to $\{\mathcal{F}_t\}$
			\item $\xi_n$ is $L_2$-NED (near-epoch dependent) with size $-1/2$
		\end{enumerate}
	\end{lemma}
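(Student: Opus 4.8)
The plan is to establish the decomposition by the classical Gordin martingale-approximation device (the martingale-coboundary, or Poisson-equation, method), adapted to the conditional score contributions. Writing $s_t(\theta) = \nabla l_t(\theta) = \nabla_\theta \log p(y_t \mid y_{<t}, \theta)$ and $\mathcal{F}_t = \sigma(y_1,\dots,y_t)$, I would first record the special structure at the data-generating parameter $\theta_0$: the information identity gives $\mathbb{E}[s_t(\theta_0)\mid\mathcal{F}_{t-1}]=0$, so there the score is already an exact martingale and $\xi_n\equiv 0$. For general $\theta$ this conditional-mean-zero property fails, and the task is to extract the martingale part while showing the residual is negligible in the near-epoch-dependent sense. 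Accordingly I center the contributions as $\tilde{s}_t(\theta)=s_t(\theta)-\mathbb{E}[s_t(\theta)]$, absorbing the stationary mean drift into the deterministic part of $\xi_n$, and work with the resulting stationary, strongly mixing sequence of contributions.

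The core construction is to solve the Poisson equation in $L_2$. Under the strong-mixing decay (A3) together with the moment bound (A2), an $\alpha$-mixing inequality of Davydov type yields $\|\mathbb{E}[\tilde{s}_{t+k}(\theta)\mid\mathcal{F}_t]\|_2 \lesssim \alpha(k)^{c}$ for a suitable exponent $c>0$, whence Gordin's summability condition $\sum_{k\ge 0}\|\mathbb{E}[\tilde{s}_{t+k}(\theta)\mid\mathcal{F}_t]\|_2<\infty$ follows comfortably from the rate in (A3). This legitimizes the definition $h_t(\theta)=\sum_{k\ge 0}\mathbb{E}[\tilde{s}_{t+k}(\theta)\mid\mathcal{F}_t]$ as an $L_2$ limit. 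A direct conditional-expectation computation then gives the telescoping identity $\mathbb{E}[h_{t+1}\mid\mathcal{F}_t]=h_t-\tilde{s}_t$, hence $\tilde{s}_t = D_{t+1}+(h_t-h_{t+1})$ with $D_{t+1}=h_{t+1}-\mathbb{E}[h_{t+1}\mid\mathcal{F}_t]$. Summing over $t=1,\dots,n$ produces $\nabla L(\theta)-n\,\mathbb{E}[s_t(\theta)] = M_n(\theta)+\xi_n(\theta)$, where $M_n(\theta)=\sum_{t=1}^n D_{t+1}(\theta)$ and $\xi_n(\theta)=h_1(\theta)-h_{n+1}(\theta)$.

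The two asserted properties are then checked separately. That $M_n$ is an $\{\mathcal{F}_t\}$-martingale is immediate, since each increment $D_{t+1}$ is measurable and satisfies $\mathbb{E}[D_{t+1}\mid\mathcal{F}_t]=0$ by construction. For the residual, I would show that $h_t(\theta)$—and therefore the coboundary $\xi_n$—is $L_2$-NED of size $-1/2$ on the underlying process. The NED coefficient is controlled by splitting $\|h_t-\mathbb{E}[h_t\mid\mathcal{F}_{t-m}^{t+m}]\|_2$ into the tail of the defining series beyond lag $m$ and the truncation error of replacing each conditional expectation by one on the finite window $\mathcal{F}_{t-m}^{t+m}$; both pieces are bounded by sums of mixing coefficients, and the rate in (A3), namely $\sum_k k^5\alpha(k)^{1/6}<\infty$, is precisely calibrated to force $\nu(m)=O(m^{-1/2-\epsilon})$, i.e.\ size $-1/2$. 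Since $h_t$ is stationary and $L_2$-bounded, $\|\xi_n\|_2\le 2\|h_0\|_2=O(1)$, confirming that the coboundary is negligible at the $\sqrt{n}$ scale relative to $M_n$.

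The main obstacle will be this NED verification for $\xi_n$: converting the $\alpha$-mixing rate in (A3) into an explicit decay bound on the NED coefficients of the infinite series $h_t$. It is delicate because one must simultaneously control the Gordin tail and the finite-window approximation error, and because passing from mixing coefficients to $L_2$ bounds on conditional expectations requires moments strictly above order two—this is exactly why (A2) imposes a finite second moment on $\nabla^3 L$ and why the mixing sum in (A3) carries the fractional power $\alpha(k)^{1/6}$. Matching these rates to the borderline size $-1/2$, rather than to an easier super-summable rate, is the technically demanding part of the argument.
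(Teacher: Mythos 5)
Your proposal is correct in substance but takes a genuinely different route from the paper. The paper uses the Doob-type innovation decomposition: it defines $H_t := \nabla l_t - \mathbb{E}[\nabla l_t \mid \mathcal{F}_{t-1}]$, takes $M_n = \sum_{t=1}^n H_t$ as the martingale (immediate by construction), and lets the residual be the sum of compensators $\xi_n = \sum_{t=1}^n \mathbb{E}[\nabla l_t \mid \mathcal{F}_{t-1}]$, whose $L_2$-NED property is then asserted directly from the mixing condition (A3) via the bound $\|\xi_t - \mathbb{E}[\xi_t\mid\mathcal{F}_{t-k}^{t+k}]\|_2 \leq C\alpha(k)^{1/6}$. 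You instead run Gordin's Poisson-equation/coboundary construction: $h_t = \sum_{k\geq 0}\mathbb{E}[\tilde{s}_{t+k}\mid\mathcal{F}_t]$, the telescoping identity $\tilde{s}_t = D_{t+1} + (h_t - h_{t+1})$, and a residual equal to the coboundary $h_1 - h_{n+1}$ plus the deterministic drift $n\,\mathbb{E}[s_t(\theta)]$. The trade-off: the paper's martingale part costs nothing to construct, but its residual is a sum of $n$ predictable terms whose negligibility rests entirely on the asserted (not proven) NED bound; your construction requires verifying Gordin's summability condition (Davydov's inequality plus (A3), which holds comfortably given the fast decay $\alpha(k) = o(k^{-36})$ implied by (A3)), but in exchange the stochastic part of your residual is uniformly $L_2$-bounded in $n$, a strictly stronger negligibility statement that is more directly usable in the downstream CLT-type arguments of Appendix D. Two details worth noting: your $M_n = \sum_t D_{t+1}$ is a martingale with respect to the shifted filtration $\{\mathcal{F}_{t+1}\}$ rather than $\{\mathcal{F}_t\}$ as the lemma states (harmless, but should be flagged); and your observation that the residual vanishes identically at $\theta_0$ by the information identity is a clarification the paper's proof does not make, and it explains why the decomposition is only nontrivial away from the true parameter.
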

	
	\begin{proof}
		\textbf{Step 1: Construct Martingale Differences}
		
		Define the innovation process:
		\[
		H_t := \nabla l_t - \mathbb{E}[\nabla l_t|\mathcal{F}_{t-1}].
		\]
		Then $M_n = \sum_{t=1}^n H_t$ forms a martingale sequence.
		
		\textbf{Step 2: Verify NED Properties}
		
		Using the mixing condition (A3) from Appendix~\ref{app:regularity-conditions}:
		\[
		\|\xi_t - \mathbb{E}[\xi_t|\mathcal{F}_{t-k}^{t+k}]\|_2 \leq C \alpha(k)^{1/6}.
		\]
		The size condition follows from $\sum k^2 \alpha(k)^{1/6} < \infty$.
	\end{proof}
	
	\section{Variance Characterization}\label{app:variance-char}
	
	This section provides detailed analysis of the variance characterization for dependent processes.
	
	The Fisher information matrix $I(\theta_0)$ governs the asymptotic variance of the score process:
	\begin{align*}
		\mathrm{Var}(S_n) &= \mathbb{E}[S_n S_n^\top] \\
		&= I(\theta_0) + 2\sum_{k=1}^{n-1} \left(1-\frac{k}{n}\right)\Gamma_k + o(1),
	\end{align*}
	where $\Gamma_k = \mathrm{Cov}(\nabla l_t, \nabla l_{t+k})$ represents the autocovariance at lag $k$.
	
	The covariance structure captures the temporal dependence in the data and is essential for proper penalty calculation in CC-WAIC.
	
	\section{Proof of Theorem 2.3 (CC-WAIC Convergence)}\label{app:convergence-proof}
	
	This appendix provides the complete proof of Theorem 2.3, establishing the asymptotic convergence of CC-WAIC.
	
	\begin{proof}
		We proceed in four parts:
		
		\textbf{Part 1: Martingale Approximation}
		
		Define the normalized score process:
		\[
		S_n(\theta) := \frac{1}{\sqrt{n}}\sum_{t=1}^n \nabla l_t(\theta).
		\]
		
		Using Lemma 2 from Appendix~\ref{app:martingale-approx}, we decompose:
		\[
		S_n(\theta) = M_n(\theta) + R_n(\theta),
		\]
		where:
		\begin{itemize}
			\item $M_n$ is a martingale with $\mathbb{E}[M_n|\mathcal{F}_{n-1}] = 0$
			\item $R_n$ satisfies $\|R_n\|_2 \leq C n^{-1/2}\sum_{k=1}^n \nu_k$
		\end{itemize}
		
		\textbf{Part 2: Variance Characterization}
		
		Using the results from Appendix~\ref{app:variance-char}, the asymptotic variance is characterized by:
		\begin{align*}
			\mathrm{Var}(S_n) &= I(\theta_0) + 2\sum_{k=1}^{n-1} \left(1-\frac{k}{n}\right)\Gamma_k + o(1)
		\end{align*}
		
		\textbf{Part 3: Bias Correction Analysis}
		
		The bias-corrected estimator:
		\[
		\hat{p}_{\mathrm{CC}}^{\mathrm{corr}} = \hat{p}_{\mathrm{CC}}\left(1 + \frac{\hat{\tau}}{S}\right)
		\]
		has expectation:
		\begin{align*}
			\mathbb{E}[\hat{p}_{\mathrm{CC}}^{\mathrm{corr}}] &= p_{\mathrm{true}} + \mathcal{O}(S^{-2}).
		\end{align*}
		
		\textbf{Part 4: Final Assembly}
		
		Combining the previous parts:
		\begin{align*}
			\mathrm{CC\text{-}WAIC} &= -2\bar{L} + 2\hat{p}_{\mathrm{CC}}^{\mathrm{corr}} \\
			&= -2L(\theta_0) + 2p_{\mathrm{true}} + \mathcal{O}_p(n^{-1/2}),
		\end{align*}
		which completes the proof of Theorem 2.3.
	\end{proof}
	
	\section{Technical Lemmas and Supporting Results}\label{app:technical-lemmas}
	This appendix contains additional technical lemmas and supporting results used throughout the proofs.
	
	\begin{lemma}[Moment Bound]
		Under the regularity conditions in Appendix~\ref{app:regularity-conditions}, the log-likelihood derivatives satisfy:
		\[
		\mathbb{E}\left[\left\|\frac{\partial^3}{\partial\theta^3} \log p(y_t|y_{<t},\theta)\right\|^2\right] < C < \infty
		\]
		uniformly in $t$ and $\theta$.
	\end{lemma}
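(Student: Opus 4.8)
The plan is to reduce the third derivative of the conditional log-likelihood to ratios of derivatives of the predictive density, and then bound each ingredient using the smoothness of the model together with the ergodicity and mixing conditions (A1)--(A3). Writing $f_t(\theta) = p(y_t \mid y_{<t}, \theta)$, repeated application of the quotient rule to $l_t = \log f_t$ expresses $\nabla^3 l_t$ as a polynomial in the normalized derivatives $\nabla^j f_t / f_t$ for $j = 1, 2, 3$; concretely, each summand is a product of such ratios with total derivative order three. Hence it suffices to bound the numerators $\|\nabla^j f_t\|$ from above and the denominator $f_t$ from below, uniformly in $t$ and in $\theta$ over compact subsets of the parameter space $\Theta$.

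First I would control the numerators. Using the forward-filter representation $f_t(\theta) = \sum_{j=1}^K p(y_t \mid z_t = j, \phi_j)\,\bigl(\sum_{i=1}^K \alpha_{t-1}(i) A(i,j)\bigr)$ from Section~\ref{sec:hmm_ccwaic}, I would differentiate under the sum---justified by dominated convergence and the finite-moment condition (A2)---to obtain $\nabla^j f_t$. Because the forward weights $\alpha_{t-1}(i)$ and the transition entries $A(i,j)$ lie in the unit simplex, and the emission densities are smooth with parameter derivatives dominated by an integrable envelope depending only on the current observation, each $\|\nabla^j f_t\|$ is bounded pathwise by a function $G^{(j)}(y_t, y_{<t})$ with finite moments under (A2). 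The chain rule for the forward recursion transmits the parameter derivatives through the $\alpha_{t-1}(i)$, but these contributions remain bounded because the recursion is a smooth map of simplex-valued quantities.

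The main obstacle is the denominator: the predictive density $f_t$ can be arbitrarily small, so the ratios $\nabla^j f_t / f_t$ need not have finite moments without further control, and the conditioning set $y_{<t}$ grows with $t$, threatening the uniformity in $t$. Here I would invoke geometric ergodicity (A1): the forgetting property of the filter ensures that the law of $f_t$ stabilizes and does not degenerate as $t \to \infty$, so the negative moments $\mathbb{E}[f_t^{-p}]$ can be bounded uniformly in $t$ on any compact $\Theta$ on which the emission densities are bounded below over the support of the envelopes. Combining the uniform numerator bounds with this uniform lower bound via H\"older's inequality then yields $\sup_t \mathbb{E}[\|\nabla^3 l_t\|^2] \le C < \infty$, with the $\alpha$-mixing condition (A3) used to control the tails of $G^{(j)}(y_t, y_{<t})$ under the stationary law. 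Uniformity in $\theta$ follows from the continuity of all the bounds together with compactness, completing the argument.
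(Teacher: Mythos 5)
Before assessing your argument, note what the paper itself does: this lemma appears in Appendix~\ref{app:technical-lemmas} with no proof at all --- it functions as a per-observation, uniform-in-$t$ strengthening of the moment assumption (A2) of Appendix~\ref{app:regularity-conditions} rather than as something derived from the stated conditions. So there is no paper proof to match, and your attempt must stand on its own. Its outline --- writing $\nabla^3 \log f_t$ as a polynomial in the ratios $\nabla^j f_t/f_t$, bounding numerators through the forward recursion of Section~\ref{sec:hmm_ccwaic}, and controlling the denominator by filter stability --- is the shape of the classical HMM arguments (Bickel--Ritov--Ryd\'en, Douc--Moulines--Ryd\'en), but two of your key steps do not hold as stated.

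First, the claim that the parameter derivatives of the forward probabilities ``remain bounded because the recursion is a smooth map of simplex-valued quantities'' is not a valid argument: the $t$-fold iterate of a smooth map can have derivatives growing geometrically in $t$. Boundedness of $\sup_t\|\nabla^j\alpha_t\|$ requires the filter recursion to be a contraction (e.g.\ in the Hilbert projective metric, under a Doeblin-type minorization $A(i,j)\ge\delta>0$), and the resulting exponential forgetting of the derivative filters is precisely the hard technical content of this lemma; in your write-up it is asserted, not proved. Second, your denominator strategy fails in the paper's own Gaussian-emission setting. You propose to bound negative moments $\mathbb{E}[f_t^{-p}]$ under the assumption that emission densities are ``bounded below over the support of the envelopes,'' but Gaussian emissions on unbounded support are not bounded below, and $\mathbb{E}[f_t^{-p}]$ is generically infinite there: since $f_t \le \max_j \eta_j(y_t)$, the integrand contains a factor growing like $\exp\bigl(p(y-\mu_j)^2/2\sigma_j^2\bigr)$, which overwhelms the Gaussian decay of the stationary density for every $p\ge 1$, so the H\"older step collapses. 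The standard fix is never to isolate $1/f_t$: using the minorization one has $f_t\ge\delta\max_j \eta_j(y_t)$, so every term in the expansion can be paired so that only ratios $\nabla^j\eta_k/\eta_k$ appear, and for Gaussians these are polynomials in $y_t$ with finite moments under the stationary law. Finally, your proof is specific to HMMs on compact parameter sets, whereas the lemma is stated for general sequential models uniformly over $\Theta$; that restriction should be flagged explicitly --- though in fairness, the fully general statement is likely not provable from (A1)--(A4) at all, which is presumably why the paper leaves it unproved.
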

	
	\begin{lemma}[Mixing Inequality]
		For $\alpha$-mixing processes with $\sum_{k=1}^\infty k^5\alpha(k)^{1/6} < \infty$, the covariance terms satisfy:
		\[
		|\mathrm{Cov}(\nabla l_i, \nabla l_j)| \leq C \alpha(|i-j|)^{1/3}
		\]
		for some constant $C > 0$.
	\end{lemma}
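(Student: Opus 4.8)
The plan is to establish the bound through Davydov's covariance inequality for strongly mixing random variables, after first reducing the two scores to functionals of temporally separated blocks of the data. Write $i < j$, set $k = j - i$, and let $\mathcal{F}_a^b = \sigma(y_a, \ldots, y_b)$. The central difficulty is that each conditional score $\nabla l_t = \nabla_\theta \log p(y_t \mid y_{<t}, \theta)$ depends on the entire history $y_{\le t}$ through the forward/filtering recursion, so $\nabla l_i$ and $\nabla l_j$ are not measurable with respect to $\sigma$-algebras separated by a gap, and Davydov's inequality cannot be invoked directly.

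First I would truncate the memory of the later score. Exploiting geometric ergodicity, condition (A1) of Appendix~\ref{app:regularity-conditions}, the filter forgets its initialization at a geometric rate, so $\nabla l_j$ computed from the full past can be approximated by a surrogate $\nabla l_j^{(m)}$ that restarts the recursion at time $j-m$ from the stationary filter, with
\[
\bigl\| \nabla l_j - \nabla l_j^{(m)} \bigr\|_2 \le C \rho^{m}, \qquad \rho < 1 .
\]
Choosing $m = \lfloor k/2 \rfloor$ renders $\nabla l_j^{(m)}$ measurable with respect to $\mathcal{F}_{i + \lceil k/2 \rceil}^{j}$ while $\nabla l_i$ remains $\mathcal{F}_{-\infty}^{i}$-measurable, so the pair is now separated by a gap of order $k/2$.

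Next I would apply Davydov's inequality to the separated pair. With admissible exponents satisfying $1/p + 1/q + 1/r = 1$ and the choice $p = q = 3$, $r = 3$, it gives
\[
\bigl| \mathrm{Cov}(\nabla l_i, \nabla l_j^{(m)}) \bigr| \le C\, \alpha(\lfloor k/2 \rfloor)^{1/3}\, \| \nabla l_i \|_3 \, \bigl\| \nabla l_j^{(m)} \bigr\|_3 ,
\]
where the $L_3$-norms of the scores are finite uniformly in $t$ by the moment conditions of Appendix~\ref{app:regularity-conditions} together with the Moment Bound lemma of Appendix~\ref{app:technical-lemmas}, making the product of norms an absolute constant. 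The truncation error contributes $\bigl|\mathrm{Cov}(\nabla l_i, \nabla l_j - \nabla l_j^{(m)})\bigr| \le \|\nabla l_i\|_2\, \|\nabla l_j - \nabla l_j^{(m)}\|_2 \le C\rho^{k/2}$ by Cauchy--Schwarz, which decays geometrically and is absorbed into the $\alpha(k)^{1/3}$ term. Collecting both contributions, and using that $\alpha(\lfloor k/2\rfloor)$ is comparable to $\alpha(k)$ up to a constant under the assumed decay regularity, yields $|\mathrm{Cov}(\nabla l_i, \nabla l_j)| \le C\alpha(k)^{1/3}$.

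I expect the main obstacle to be the truncation step rather than the invocation of Davydov's inequality: one must verify that the filtering gradient, not merely the filtered density, is stable under geometric ergodicity so that the $\rho^{m}$ bound genuinely holds, and one must reconcile the geometric truncation rate with the mixing rate so that $\rho^{k/2}$ is dominated by $\alpha(k)^{1/3}$ and $\alpha(\lfloor k/2\rfloor)^{1/3} \le C\alpha(k)^{1/3}$. A secondary technical point is deducing finite third moments of the scores, required for the exponent $1/3$, from regularity conditions that directly control second moments of third derivatives; this calls for a standard smoothness-plus-domination argument linking derivative bounds to score moments.
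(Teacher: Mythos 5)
There is no proof in the paper to compare against: the Mixing Inequality is stated bare in Appendix~\ref{app:technical-lemmas}, with neither an argument nor a citation, so your proposal supplies something the paper lacks. Your strategy is the standard and essentially correct route for conditional scores: you rightly observe that $\nabla l_j$ depends on the entire history through the filtering recursion, so Davydov's inequality cannot be applied directly; you restore $\sigma$-algebra separation by a filter-forgetting truncation, invoke Davydov with $p=q=r=3$ to produce the $\alpha^{1/3}$ factor, and control the truncation remainder by Cauchy--Schwarz.

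Two of the points you flag as obstacles are, however, genuine gaps rather than routine technicalities, and as written the argument does not deliver the lemma as stated. First, the absorption step fails in general: your bound has the form $C\bigl(\alpha(\lceil k/2\rceil)^{1/3}+\rho^{k/2}\bigr)$, and neither term is dominated by $C\alpha(k)^{1/3}$ without further assumptions --- monotonicity of mixing coefficients gives the wrong direction, $\alpha(\lceil k/2\rceil)\ge\alpha(k)$, and if the process mixes faster than geometrically (e.g.\ $m$-dependence, where $\alpha(k)=0$ beyond a finite lag) then $\rho^{k/2}$ is not $O(\alpha(k)^{1/3})$ at all. The honest fix is to state what the truncation argument actually proves, namely $|\mathrm{Cov}(\nabla l_i,\nabla l_j)|\le C\bigl(\alpha(\lfloor |i-j|/2\rfloor)^{1/3}+\rho^{|i-j|/2}\bigr)$; this weaker bound is still summable under $\sum_k k^5\alpha(k)^{1/6}<\infty$ and therefore suffices for the only use the paper makes of the lemma, the covariance summation in the variance expansion of Appendix~\ref{app:variance-char}. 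Second, the moment gap cannot be closed by the ``smoothness-plus-domination'' argument you anticipate: Davydov with exponent $1/3$ requires $\sup_t\|\nabla l_t\|_3<\infty$, but condition (A2) and the Moment Bound lemma control moments of the \emph{third derivative in $\theta$}, and smoothness in $\theta$ says nothing about the tails of the score as a function of the data --- a score can have infinite third moment while $\nabla^3 L$ is uniformly bounded. Uniform $L_3$ (or $L_{2+\epsilon}$, at the price of a smaller exponent on $\alpha$) moments of the score must simply be added as a hypothesis. Finally, note that condition (A1) as stated is geometric ergodicity of a kernel on the parameter space $\Theta$ (i.e.\ the MCMC chain), not forgetting of the filter of the observed process, so the key estimate $\|\nabla l_j-\nabla l_j^{(m)}\|_2\le C\rho^m$ --- which for the \emph{gradient} of the filter is itself a nontrivial stability result requiring uniform ergodicity of the latent chain and controlled emission densities --- rests on an assumption the paper does not actually provide.
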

	
	\section{Derivation of Bias Correction Terms}\label{app:bias-derivation}
	
	This appendix provides the detailed derivation of the bias correction terms used in Section 2.4.
	
	The integrated autocorrelation time is defined as:
	\[
	\tau = 1 + 2\sum_{k=1}^{\infty} \rho_k,
	\]
	where $\rho_k = \mathrm{Corr}(L^{(s)}, L^{(s+k)})$ is the lag-$k$ autocorrelation.
	
	The effective sample size is then:
	\[
	N_{\mathrm{eff}} = \frac{S}{\tau}.
	\]
	
	The bias-corrected estimator:
	\[
	p_{\mathrm{CC}}^{\mathrm{corr}} = \frac{N_{\mathrm{eff}}}{N_{\mathrm{eff}} - 1} \hat{p}_{\mathrm{CC}}
	\]
	accounts for the underestimation of variance due to autocorrelation in MCMC samples.
	
	For higher-order corrections, we use:
	\begin{align*}
		C_1 &= \frac{\tau^2}{2} \left( \kappa_4 - \frac{\kappa_2^2}{2} \right), \\
		C_2 &= \frac{1}{S} \sum_{k=-(S-1)}^{S-1} \kappa\left(\frac{k}{b_n}\right)\hat{\Gamma}_k,
	\end{align*}
	where $\kappa_j$ are sample cumulants and $\kappa(\cdot)$ is a symmetric kernel function.
	
	\section{Proof of Theorem 2.6 (Equivalence to Block LOO)}\label{app:equivalence-proof}
	
	This appendix provides the proof of Theorem 2.6, establishing the asymptotic equivalence between CC-WAIC and block leave-one-out cross-validation.
	
	\begin{proof}[Proof Sketch]
		Consider a stationary first-order Markov process $(y_t)_{t=1}^T$ satisfying:
		
		\begin{enumerate}
			\item \textbf{Weak dependence}: The autocovariance of the log-likelihood decays sufficiently fast
			\item \textbf{Mixing condition}: The process satisfies strong mixing with $\alpha(k) \to 0$ as $k \to \infty$
			\item \textbf{Bounded moments}: $\mathbb{E}|\log p(y_t \mid \theta)|^4 < \infty$, uniformly in $t$
		\end{enumerate}
		
		\textbf{Step 1: Expansion of the log-predictive term}
		
		Using Bayesian asymptotics and posterior consistency:
		\[
		\log \mathbb{E}_\theta [p(y_t \mid \theta)] = \log p(y_t \mid \mathbf{y}_{<t}) + \mathcal{O}_p(T^{-1/2}),
		\]
		by the Bernstein-von Mises theorem for dependent processes.
		
		\textbf{Step 2: Penalty correction approximation}
		
		The correction term in CC-WAIC:
		\[
		\gamma_t \approx \frac{1}{S} \sum_{h=0}^\ell w(h) \mathrm{Cov}_s \left[ \log p(y_t \mid \theta^{(s)}), \log p(y_{t-h} \mid \theta^{(s)}) \right]
		\]
		estimates the local predictive variance and is asymptotically equivalent to the block LOO bias correction.
		
		\textbf{Step 3: Difference and convergence}
		
		The total difference:
		\[
		\left| \mathrm{CC\text{-}WAIC} + 2 \sum_{t=1}^T \log p(y_t \mid \mathbf{y}_{<t}) - \mathrm{BLFO} \right| = \mathcal{O}_p(T^{-1/2}),
		\]
		which completes the proof.
	\end{proof}

\end{appendices}
\bibliographystyle{plainnat}
\bibliography{ref}	
\end{document}